\newtheorem{Theorem}{Theorem}[section]
\newtheorem{Proposition}[Theorem]{Proposition}
\theoremstyle{definition}
\newtheorem{Definition}[Theorem]{Definition}
\theoremstyle{remark}
\numberwithin{equation}{section}
\newcommand{\R}{\mathbb R}
\begin{document}

\title{A mathematical bridge between discretized gauge theories in quantum physics and approximate reasonning in pairwise comparisons}
\author{Jean-Pierre Magnot}
%\author{}%

\address{LAREMA, Universit\'e d’Angers, 2 Bd Lavoisier
	, 49045 Angers cedex 1, France and Lyc\'ee Jeanne d'Arc, 40 avenue de Grande Bretagne, 63000 Clermont-Ferrand, France}%
\email{ jean-pierr.magnot@ac-clermont.fr}
%\email{}
%\email{}%

%\commby{}%
% ----------------------------------------------------------------
\begin{abstract}
	We describe a mathematical link between aspects of information theory, called pairwise comparisons, and discretized gauge theories. The link is made by the notion of holonomy along the edges of a simplex. This correspondance leads to open questions in both field.
\end{abstract}
\maketitle
\textit{Keywords:} Pairwise comparisons, discretized gauge theories, holonomy.

\textit{MSC(2010)}  03F25, 70S15, 81T13
\section*{Introduction}
We present here an overview, addressed to physicists, of a possible bridge between gauge theories (and also some aspects of quantum gravity) with pairwise comparisons matrices and their applications in information theory and approximate reasonning. This is the reason why we fastly summarize features in physics (assuming that they are known by the reader), and give more details for selected features on Pairwise Comparisons (PC) matrices (assuming that this field is less known). This paper is a companion work to \cite{Ma-K,Ma-alg,Ma2017} where, after a tentative in \cite{KSW2016}, the extension of the notion of classical PC matrices to matrices with coefficients in a group is considered, partially motivated (in my case) by the striking similarities with mathematical constructions in discretized gauge theories. A not complete list of reference about PC matrices is \cite{DK1994,K1993,KS2014}, oriented in our perspective, and a very partial list of references about gauge theories and their discretized forms is \cite{AHK,AHKH1,AHM,AZ1990,Hah2004,HCR2015,Lim2012,Re1997,RV2014,SSSA2000,Sengupta,Sengupta2004}. 

We begin with an oriented survey of selected problems in  discretization of $G-$gauge theories, where $G$ is a Lie group, and a selection of features in evaluation of inconsistency in pairwise comparisons with coefficients in $\R_+^*.$ Then we describe, following \cite{Ma-K,Ma-alg}, a straightforward extension of  PC matrices with coefficients in $\R_+^*$ to a general Lie group $G.$ The link with gauge theories is performed via holonomy, which appears in discretizations described in \cite{Ma2017,RV2014}. We finish with the possible interpretations in both sides of this correspondence, first from quantities on PC matrices to gauge theories, and secondly from second quantization to approximate reasonning. 
 \section{A short and not complete survey of each field of knowledge}
 We present here the two fields under consideration, discretized gauge theories and pairwise comparisons in aproximate reasonning, in a way to highlight the correspondances. 
 \subsection{Gauge theories discretized}
 The phase space of a (continuum) gauge theory is a space of connections on a (finite dimensional) principal bundle $P$ with structure group $G$ and with base $M$. We note by $C(P)$ the space of connections considered. If $M$ is not compact and Riemannian, one often use the space of connections which are smooth and square-integrable. A gauge theory is defined by an action functionnal $S:C(P)\rightarrow \R$ which has to be minimized. 
 
 A discretized gauge theory is defined on a triangulation, a cubification or any other way to discretize the manifold $M,$ and the principal bundle $P$ can be often trivial. Let us highlight two kind of discretizations:
 
 \begin{itemize}
 	\item Whitney's discretization \cite{Wh}, where a connection $\theta \in C(P)$ is (Riemann-)integrated on the 1-vertices of the chosen triangulation. This mimiks a finite element method of approximation for scalar functions, and the discretized connection $\theta_W$ generates a $H^1-$approximation $\tilde\theta$ of $\theta,$ which $H^1-$converges to $\theta$ when refining the triangulation. The action functionnal $S$ is then evaluated on the finite dimensional space of cnnnections $\tilde{\theta}$ instead of the infinite dmensional space $C(P).$ This is, to our knowledge the most widely developped approach, but this approach seems to fail partially for non-abelian theories, partly because the triviaisation is not gauge-covariant. Ths leads to gauge-fixing strategies.
 	\item Holonomy discretization, mostly inspired by the ideas of quantum gravity \cite{RV2014}, where connections along the edges are discretized through their holonomy. This approach requires mathematical precisions by fixing a preliminary gauge on the 1-vertices of the discretized manifold, but in a final analysis, only theories depending of secondary characteristic classes (e.g. Chern-Simons theory) can give rise to pathologies in gauge covariance, where as gauge theories arising from primary characteristic classes (e.g. Yang-Mills theories) are fully gauge covariant when discretized \cite{Ma2017}. 
 \end{itemize}  
 \subsection{Pairwise comparisons, consistency and inconsistency}
 It is easy to explain the inconsistency in PC matrices when we consider cycles of three comparisons, called triads and represented here as $(x,y,z)$,
 which do not have the ``morphism of groupoid'' property such as $$x.z \neq y$$ 
 %Evidently, the inconsistency in a triad $(x,y,z)$ is somehow (not linearly) proportional to $y-xz.$

 %In the linear space (after a logarithmic transformation), the inconsistency is measured by the ``approximate flatness'' of the triangle. The triad is consistent if the triangle is flat.
 %For example, $(1,2,1)$ and $(10,101,10)$ have the difference $y-xz =1$ but the inconsistency in the first triad is unacceptable but it is acceptable in the second triad. In order to measure inconsistency, one usually considers coefficients $a_{i,j}$ with values in an abelian group $G,$ with al least 3 indexes $i,j,k.$  
 \noindent The use of ``inconsistency'' has a meaning of a measure of inconsistency in this study; not the concept itself. 
 %Consistency in a strict sense corresponds then to the relation $$a_{i,j}.a_{j,k} = a_{i,k}$$ and one can measure inconsistency by the inconsistency indicator introduced in \cite{K1993}, generalized in \cite{DK1994}, and simplified in \cite{KS2014} to:
 %$$ii = e^{-\left| \ln \frac{a_{i,k}}{a_{i,j}.a_{j,k}}\right|}$$ 
 %which extends to a matrix of any dimension \cite{DK1994,K1993,KS2014} by classical estimates on matrix coefficients. 
 One approach to inconsistency (originated  in \cite{K1993} and generalized in \cite{DK1994}) can be reduced to a simple observation:
 \begin{itemize}
 	\item 
 	search all triads (which generate all 3 by 3 PC submatrices) and locate the worst triad with an inconsistency indicator ($ii$),
 	\item 
 	$ii$ of the worst triad becomes $ii$ of the entire PC matrix.
 \end{itemize}
 
 \noindent Expressing it a bit more formally in terms of triads (the upper triangle of a PC submatrix $3 \times 3$), we have:
 $$ii_3(x,y,z) = 1-\min\left \{\frac{y}{xz},\frac{xz}{y}\right \} =1- e^{-\left|\ln\left (\frac{y}{xz}\right )\right |}$$
 \noindent The expression $|\ln(\frac{y}{xz})|$
 stands for the distance of the triad $T$ to the "nearest" consistent PC matrix. When this distance increases, the $ii(x,y,z)$ also increases. It is important to notice here that this definition allows us to localize the inconsistency in the PC matrix, which is of a considerable importance for most applications. For highr rank matrices, $ii_3$ is evaluated on each $3\times 3$ matrices, taking the supremum of the obtained values. 
 
 Another possible definition of the inconsistency indicator can also be defined (following \cite{KS2014}) as:
 $${\rm ii}_n(A)=1-\min_{1\le i<j\le n}  \min\left ({a_{ij}\over a_{i,i+1}a_{i+1,i+2}\ldots a_{j-1,j}},\,
 {a_{i,i+1}a_{i+1,i+2}\ldots a_{j-1,j}\over a_{ij}} \right ) $$
 
 \noindent since the matrix $A$ is consistent if and only if for any $1\le i<j\le n$ the following equation holds:
 $$a_{ij}=a_{i,i+1}a_{i+1,i+2}\ldots a_{j-1,j}.$$
 
 \noindent This is equivalent to:
 $${\rm ii}_n(A)=1-\max_{1\le i<j\le n}
 \left (1 - e^{-\left |\ln \left ( {a_{ij}\over
 		a_{i,i+1}a_{i+1,i+2}\ldots  a_{j-1,j}}  \right )\right |}\right )
 $$
 
 %The first definition of $ii$ allows us not only to localize the worst inconsistency but to reduce the inconsistency by a step-by-step process which is crucial for practical applications. The second definition of $ii$ is useful when the global inconsistency indicator is needed for acceptance or rejection of the PC matrix. A hybrid of two $ii$ definitions may be considered in applications. 
 One of the main features in applications is to minimize the inconsistency indicator $ii.$ This is the reason why, instead of using the inconsistency indicator $ii$ defined before, there is plethore of inconsistency indicators. Each inconsistency indicator intends to measure how far a PC matrix is from the set of consistent PC matrices, which, for $3 \times 3$ PC matrices, is a 2-dimensional manifold of matrices of the form $$\left(\begin{array}{ccc}
 1 & x & xy \\ x^{-1} & 1 & y \\ x^{-1}y^{-1} & y^{-1} & 1
 \end{array}\right), \hbox{ with } (x,y)\in (\mathbb{R}_+^*)^2.$$  
Unfortunately, the notion and the theory of inconsistency indicators is not actually fixed and acheived, and many competing, uncompatible approaches are actually developped.  
 \section{The matrix of holonomies and pairwise comparisons}
 
 We follow first \cite{Ma-alg}.
 Let $I$ be a set of indexes among $\mathbb{Z},$ $\mathbb{N}$ or $\{0,...,n\} $ for some $n \in \mathbb{N}^*.$
 
 \begin{Definition}
 	Let $(G,.)$ be a group. A PC matrix is a matrix $$A= (a_{i,j})_{(i,j)\in I^2} $$
 	such that:
 	
 	\begin{enumerate}
 		\item $\forall(i,j) \in I^2, a_{i,j}\in G.$

 		\item  \label{inverse} $ \forall (i,j)\in I^2, \quad a_{j,i}=a_{i,j}^{-1}. $
 		
 		\item \label{diagonal} $a_{i,i} = 1_G.$ 
 		
 	\end{enumerate}
 	
 \end{Definition}
 
 \noindent
 %\begin{rem}
 %	The above definition has been adjusted after receiving remarks from two independent researchers on the earlier version of this text.
 %	Condition \ref{diagonal} is a consequence of condition \ref{inverse} when $G=\mathbb{R}_+^*,$ but when $G = S^1 = \{z \in \mathbb{C}, |z| = 1\},$ condition \ref{diagonal} is necessary. We have to argue that this condition is very natural and intuitive: for self-comparison of an event $A,$ there is nothing to say more than $A = A,$ which is traduced by $a_{i,i}=1_G.$
 %\end{rem}
 
 The matrix $A$ is \textbf{covariantly consistent} if \begin{equation}
 \label{consistent} \forall (i,j,k)\in I^3, \quad a_{i,j}.a_{j,k} = a_{i,k}.
 \end{equation}
 
 Due to the contravariant composition thereafter, we  will use the following notion: the PC matrix is \textbf{contravariantly consistent} if
 \begin{equation}
 \label{consistent2} \forall (i,j,k)\in I^3, \quad a_{j,k}. a_{i,j} = a_{i,k}.
 \end{equation} 
 These two notions are dual, and depend on which order we require for the group multiplication. 
 %Intuitively speaking, this corresponds to the difference between right actions and left actions of the group $G$ on a configuration space. There is a one-to one correspondence between right actions and left actions by the contravariant group morphism $$g \mapsto g^{-1} \quad \hbox{(inversion map)}.$$
 Similarly, a contravariant consistent PC matrix $A= (a_{i,j})_{(i,j)\in I^2} $ generates a covariant consistent PC matrix $B= (b_{i,j})_{(i,j)\in I^2} $ setting $$ b_{i,j} = a_{i,j}^{-1}= a_{j,i}.$$
 For convenience, we use the terms \textbf{covariant PC matrix} (resp. \textbf{contravariant PC matrix} ) when covariant consistency (resp. contravariant consistency) is naturally required.
 
 \begin{Theorem} \label{th1}
 	\begin{equation*}
 	\exists (\lambda_i)_{i \in I}, \quad a_{i,j} = \lambda_i^{-1}. \lambda_j \Leftrightarrow A ~~is~consistent.
 	\end{equation*}  
 	
 \end{Theorem}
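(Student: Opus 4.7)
The plan is to prove the two implications separately, noting that the formula $a_{i,j} = \lambda_i^{-1}\lambda_j$ matches the covariant consistency convention \eqref{consistent} (indeed $\lambda_i^{-1}\lambda_j \cdot \lambda_j^{-1}\lambda_k = \lambda_i^{-1}\lambda_k$), so I read ``consistent'' here as covariantly consistent; the contravariant case then follows by duality using the transformation $b_{i,j} = a_{j,i}$ described above.

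For the easy direction ($\Leftarrow$), I would assume $a_{i,j}=\lambda_i^{-1}\lambda_j$ and simply verify the three axioms together with \eqref{consistent} by direct telescoping in the group: the diagonal axiom follows from $\lambda_i^{-1}\lambda_i=1_G$, the inverse axiom from $(\lambda_i^{-1}\lambda_j)^{-1}=\lambda_j^{-1}\lambda_i$, and consistency from $\lambda_i^{-1}\lambda_j \cdot \lambda_j^{-1}\lambda_k=\lambda_i^{-1}\lambda_k$. No subtleties arise here.

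For the harder direction ($\Rightarrow$), assume $A$ is covariantly consistent. If $I=\emptyset$ the statement is vacuous, so fix a base index $i_0\in I$ and define $\lambda_i := a_{i_0,i}$ for every $i\in I$. Using axiom \eqref{inverse} we rewrite $\lambda_i^{-1}=a_{i_0,i}^{-1}=a_{i,i_0}$, and then apply the consistency relation \eqref{consistent} to the triple $(i,i_0,j)$:
$$\lambda_i^{-1}\lambda_j = a_{i,i_0}\cdot a_{i_0,j}=a_{i,j}.$$
This yields the desired family $(\lambda_i)_{i\in I}$. The construction depends on the choice of $i_0$, but any two such choices differ by a left translation $\lambda_i \mapsto g\lambda_i$ with $g\in G$ constant, which leaves the products $\lambda_i^{-1}\lambda_j$ invariant.

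The only genuine obstacle is conceptual rather than computational: one must make sure the basepoint choice is available (hence the case $I=\emptyset$ is handled trivially) and recognize that the argument is the non-abelian analogue of viewing a consistent PC matrix as the coboundary of a $0$-cochain, i.e.\ as the holonomy matrix of a flat discrete connection on the complete graph on $I$. Once this viewpoint is acknowledged, the proof is a short verification in two lines per direction.
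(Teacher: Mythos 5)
Your proof is correct: the paper states Theorem \ref{th1} without proof (deferring to the cited reference), and your argument --- verifying consistency by telescoping for one direction, and setting $\lambda_i = a_{i_0,i}$ for a fixed base index for the other --- is the standard and essentially unique way to prove it. Your reading of ``consistent'' as covariant consistency is the right one, since $\lambda_i^{-1}\lambda_j\cdot\lambda_j^{-1}\lambda_k=\lambda_i^{-1}\lambda_k$ matches \eqref{consistent}; the remark about the $I=\emptyset$ case is superfluous given that the paper fixes $I$ to be $\mathbb{Z}$, $\mathbb{N}$ or $\{0,\dots,n\}$, but it does no harm.
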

Let $n \in \mathbb{N}^*$
and $$\Delta_n = \left\{ (x_0,\ldots, x_n) \in \mathbb{R}^{n+1} | \left(\sum_{i=0}^n x_i = 1\right) \wedge \left(\forall i \in \{0,\ldots.n\}, x_i \geq 0\right)\right\}$$
be an $n-$simplex.
This simplex can be generalized to the infinite dimension: % dimensional cases:
$$\Delta_{\mathbb{N}} = \left\{ ( x_n)_{n \in \mathbb{N}} \in l^1(\mathbb{N}, \mathbb{R}_+^*) | \sum_{i=0}^\infty x_i = 1\right\}$$
\noindent	and 	$$\Delta_{\mathbb{Z}} = \left\{ ( x_n)_{n \in \mathbb{Z}} \in l^1(\mathbb{Z}, \mathbb{R}_+^*) | \sum_{i \in \mathbb{Z}} x_i = 1\right\},$$
where the summation over $\mathbb{Z}$ is done by integration with respect to the counting measure.
In the sequel, $\Delta$ will denote $\Delta_n,$ $\Delta_{\mathbb{N}}$ or $\Delta_{\mathbb{Z}}.$ 
%Since $\Delta$ is smoothly contractible, any $G-$principal bundle over $\Delta$ is isomorphic to $\Delta \times G$
%and a $G-$connection 1-form on $\Delta$ is a 1-form $\theta \in \Omega^1(\Delta ,\mathfrak{g}),$ which extends to a $G-$covariant 1-form in   $\Omega^1(\Delta ,\mathfrak{g}),$ with respect to the coadjoint action of $G$ on $\mathfrak{g}.$ 
We define a gauge $({g_i})_{i \in I}\in G^I$ with  $\widetilde{\gamma_i}(1) = (\gamma_i(1),{g_i})$ where $$\gamma_i = [s_0,s_1]*\ldots*[s_{i-1},s_i] \hbox{ if } i>0$$
\noindent and  $$\gamma_i = [s_0,s_{-1}]*\ldots*[s_{i+1},s_i] \hbox{ if } i<0.$$
We set $g_i = Hol_{(s_0,1_G)}\gamma_i.$
Let us recall that, for two paths $c$ and $c'$ such that $c*c'$ exists (i.e. $c(1)=c'(0)$), if $p=(c(0),e_G),$ $p'=(c'(0),e_G)$ and $h = Hol_{p}c,$ we have:

%\begin{equation} 
%\label{concat} 
%Hol_p(c*c') = Hol_p(c) . (h^{-1}Hol_{p'}(c') h) = Hol_{p'}(c'). Hol_p(c).
%\end{equation}

\noindent Let \begin{equation} \label{aij} 	a_{i,j} = g_j.Hol_{(s_0;e_G)}\left( \gamma_i [s_i, s_j] \gamma_j^{-1} \right).g_i^{-1}. \end{equation}

In the light of these specifications, we set, for any connection $\theta \in \Omega^1(\Delta, \mathfrak{g}),$

 $$ A = Mat(a_{i,j})$$
and the required notion of consistency is contravariant consistency.

\begin{Proposition}
	$A$ is a PC matrix.
\end{Proposition}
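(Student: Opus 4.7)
The plan is to verify directly the three defining axioms of a PC matrix stated in the above Definition: membership in $G$, the reciprocal relation $a_{j,i} = a_{i,j}^{-1}$, and the diagonal normalization $a_{i,i} = 1_G$. Axiom (1) is immediate, since the holonomies of a $G$-connection take values in $G$ and the chosen gauge elements $g_i$ lie in $G$, so each $a_{i,j}$ is a product in $G$. For the diagonal axiom, I note that when $i = j$ the segment $[s_i, s_j]$ degenerates to the constant path at $s_i$ and the loop inside the holonomy collapses to $\gamma_i \gamma_i^{-1}$; since the holonomy of a path followed by its reverse is $1_G$, one obtains $a_{i,i} = g_i \cdot 1_G \cdot g_i^{-1} = 1_G$.

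The main content is axiom (2). The key observation is that the loop $\gamma_j [s_j, s_i] \gamma_i^{-1}$ appearing in the definition of $a_{j,i}$ is precisely the reverse of the loop $\gamma_i [s_i, s_j] \gamma_j^{-1}$ appearing in $a_{i,j}$: reversing a concatenation reverses the order of its segments and replaces each by its own reverse, and $[s_i, s_j]^{-1} = [s_j, s_i]$. Both loops are based at $s_0$, where the horizontal lift is fixed at $(s_0, e_G)$, so the standard identity stating that the holonomy of a reversed path is the inverse of the original holonomy applies. A direct substitution then yields
$$
a_{j,i} \;=\; g_i \cdot Hol_{(s_0;e_G)}\bigl(\gamma_i [s_i,s_j] \gamma_j^{-1}\bigr)^{-1} \cdot g_j^{-1} \;=\; \bigl(g_j \cdot Hol_{(s_0;e_G)}(\gamma_i [s_i,s_j] \gamma_j^{-1}) \cdot g_i^{-1}\bigr)^{-1} \;=\; a_{i,j}^{-1}.
$$

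The only potentially delicate point in the argument is to ensure that the base point and gauge data are consistent when invoking the reversal identity for holonomy. Since all the loops in question are based at $s_0$ with the fixed horizontal lift $(s_0, e_G)$, and the gauge factors $g_i$ and $g_j$ are absorbed symmetrically on either side of the holonomy, this bookkeeping is routine. No structural obstacle is expected beyond carefully tracking the symmetry between the indices $i$ and $j$.
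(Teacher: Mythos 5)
Your proof is correct and follows exactly the route the paper intends: the paper's own proof is the one-line remark that the reciprocal axiom ``follows from holonomy in reverse orientation,'' which is precisely the reversal identity $Hol(c^{-1})=Hol(c)^{-1}$ you invoke, together with the routine checks of the diagonal and $G$-valuedness. Your write-up simply makes explicit the bookkeeping (the reversed loop, the common base point $(s_0,e_G)$, the symmetric absorption of $g_i$ and $g_j$) that the paper leaves implicit.
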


\begin{proof} This follows from holonomy in ``reverse orientation''. 
\end{proof}

\vskip 12pt
\noindent
Let $\gamma_{i,j,k} = \gamma_i * [s_i,s_j]*[s_j,s_k] * [s_k,s_i] *\gamma_i^{-1}$
be the loop based on $s_i$ along the border of the oriented 2-vertex $[s_i,s_j,s_k],$ where $*$ is the composition of paths.
%By Equation (\ref{concat}), 
Contravariant consistency seems to fit naturally with flatness of connections: 
\begin{eqnarray*} \forall i,j,k, & \quad & a_{i,k} =  a_{j,k}.a_{i,j} \\
	&\Leftrightarrow & a_{k,i} . a_{j,k} . a_{i,j} = a_{i,i} = 1_G \\
	& \Leftrightarrow & Hol (\gamma_{i,j,k}) = 1_G
\end{eqnarray*}

%W?? So that, following and adapting \cite{KSW2016}, if one fixes an \textbf{indicator map} $$I: G\rightarrow \mathbb{R}$$ satisfying $I(1_G)=0,$ we get a generalization of the \textbf{inconsistency indicator} setting $$ii_I = \sup \left\{I(Hol (\gamma_{i,j,k})) | (i,j,k)\in I^3 \right\}.$$
\medskip
\noindent By fixing an \textbf{indicator map}, defined in \cite{KSW2016} as $$In: G\rightarrow \mathbb{R}$$ to %with  
$In(1_G)=0,$ we get a generalization of the \textbf{inconsistency indicator} by setting $$ii_{In} = \sup \left\{In(Hol (\gamma_{i,j,k})) | (i,j,k)\in I^3 \right\}.$$
\bigskip
For example, if
$d$ is a left-invariant distance on $G,$ a natural indicator map can be: $$In: g \mapsto d(1_G,g^{-1} ). $$

%We need now to know if every PC matrix can be expressed as a matrix of holonomies of a fixed connexion.
%It is necessary to examine whether every PC matrix can be 
%expressed as a matrix of holonomies of a fixed connection.  
Such definitions extend to triangularized manifolds along the lines of \cite{Ma2017}, where one can see that the notion of holonomy on a manifold can be discretized, inserting "gaps" (i.e. 0 entries) on a larger matrix  gathering all the PC matrices over the simplexes of the triangulation, along the lines of \cite{Ma-alg}. We then recover the discretization of connections via holonomies decribed in \cite{RV2014}.   
 {When $G=\R_+^*$}
 In (classical) pairwise comparisons, matrix coefficients are scalar, which enable to simplify the settings obtained. The indicator $ii$ appears as the distance of the holonomy of the loop $[s_{i}s_{i+1}...s_{i+n}s_{i}]$ to identity.  
 \section{Correspondances and open questions}
We give here the following table of correspondences, that we comment. We first highlight  
how the notions on PC matrices correcpond to geometric objects. 
\vskip 12pt
\begin{tabular}{|c|c|c|}
	\hline
	PC matrices & discretized gauge theories & continuum gauge theories \\
	\hline
	consistency & 0-holonomy & 0-curvature \\
	\hline
	consistency & 0-holonomy& $\Omega(X,Y)=0$\\
	in $3\times3$ matrices&on the border of&with $X,Y$\\
	& a 2-simplex $\Delta_2$& tangent to $\Delta_2$\\
	\hline
	Koczkodaj's & $\sup d(Hol({\partial \Delta_2}), 1_G)$& $\sup_M||\Omega||$ \\
	inconsitency &when $\Delta_2$&\\
	 indicator $ii$ &is any 2-simplex of&\\
	 & the triangulation &\\
	\hline
	Minimization &?&Minimization \\
	of inconsistency&& of the curvature norm\\
	\hline
\end{tabular}
	
\vskip 12pt	
\noindent
Minimization of inconsistency is an important feature for applications of PC matrices. For decison making, it consists in adapting slighlty the parameters of the studied situation
in order to make "approximately consistent" choices. In order to check if the adapted PC matrix is "approximately consistent", the criteria is given by the chosen inconsistency indicator, e.g. Koczkodaj's $ii,$ which needs to have a value $\epsilon \geq 0 $ small enough. Again for Koczkodaj's $ii,$ it is trivial to see that the sets $$V_\epsilon = ii_{3}^{-1}[0;\epsilon[$$
form a filter base of open neighborhoods of the set of consistent PC matrices. The analog for continuum gauge theories is considering 
$$ U_\epsilon = \left\{\theta \in C(P)|\sup_M||\Omega||<\epsilon\right\}.$$
This defines a filter base of open neighborhoods of the set of 0-curvature connections. This leads to the following open problem.

On one hand, the procedures developped to get consistencizations of PC matrices are specific to the case $G=\R_+^*$. However, there exists many methods for minimizing functionnals, and one may wonder whether minimization of the curvature norm have a physical meaning. The corresponding physical quantity would be the Yang-Mills action functionnal, but there we get here an average value instead of a supremum. This questions the choice of the supremum in the formula $$ii_3(A) = \sup \left\{ii_3(B)| B \hbox{ is a } 3 \times 3 \hbox{ PC submatrix of } A \right\}.$$

On the other hand, considering flat connections, or only connections such that $\Omega=d\theta$ is a common feature in gauge fixing. Since minimal states represent stable solutions, consistencization may appear as a way to equilibrum. 

Let us now reverse the perspective and consider second quantization. We now analyze how the Feynman-like integration (i.e. cylindrical integration for discretized theories) may  arise in PC matrices. 
	\vskip 12pt
	\begin{tabular}{|c|c|c|}
		\hline
	Continuum integration& discretized integration& PC matrices\\
	\hline
	heuristic Lebesgue&Lebesgue measure&?\\
	measure&(finite dimension,&\\
	&Whitney discretization)&\\
	\hline
	integral to be studied&Product measure on $G^{n\times n}$& Measures on $n \times n$ \\
	&(G compact), on the&PC matrices\\
	&discretization by holonomies&\\
	\hline
	
\end{tabular}

\vskip 12pt
If $G$ is compact, we can assume that it is of volume 1. In this case, there is a convergence at the continuum limit to an integral. this si the approach suggested in \cite{RV2014}, which is an alternative approach to the classical integration with respect to the heuristic Lebesgue integral, see e.g. \cite{AHM}. On one hand,  for both cases, the possible interpretations in terms of applications of PC matrices are not investigated. Heuristically, measures on PC matrices may arise when evaluations are random, or subject to measurement errors. On the other hand, PC matrices may furnish an interpretation in terms of information theory of Feynman type integration. 

\vskip 12pt

The author declares that there is no conflict of interest with respect to the publication of this paper.

\end{document}